%%%%%%%%%%%%%%%%%%%%%%%%%%%%

\documentclass[nofootinbib,notitlepage,twocolumn,prl,noshowpacs,amsmath,amssymb]{revtex4-1}%nofootinbib 
\pdfoutput=1

\usepackage[titletoc,title]{appendix}
\usepackage{
	amssymb, 
	amsthm, 
	amsmath,
	mathtools,
	MnSymbol, 
	physics,
	bm,
	ifthen,
	etoolbox, 
	color, 
	%appendix,
	phonetic,
	todonotes,
	qtree
}

\newcommand{\draftmode}{1}    
\newcommand{\notetoself}[1]{\ifnum \draftmode=1 \todo[inline,  backgroundcolor=blue!20!white]{#1} \fi}
\newcommand{\cuttext}[1]{\ifnum \draftmode=1 \todo[inline,bordercolor=black!5!white,backgroundcolor=black!5!white]{\color{black!70!white} #1} \fi}
\newcommand{\todoj}[1]{\ifnum \draftmode=1 \todo[inline,backgroundcolor=green!15!white]{#1} \fi}
\newcommand{\warntext}[1]{\ifnum \draftmode=1 \todo[inline, bordercolor=orange!30!white, backgroundcolor=orange!30!white]{#1} \else #1 \fi}

\let \projector \dyad

\newcommand{\Sys}{\ensuremath{\mathcal{S}}}
\newcommand{\Env}{\ensuremath{\mathcal{E}}}
\newcommand{\Frag}{\ensuremath{\mathcal{F}}}
\newcommand{\Fragp}{\ensuremath{{\mathcal{F}^\prime}}}
\newcommand{\FragBar}{{\ensuremath{{\overline{\mathcal{F}}}}}}

\newcommand{\Grag}{\ensuremath{\mathcal{G}}}
\newcommand{\Hrag}{\ensuremath{\mathcal{J}}}

\newcommand{\Glob}{\ensuremath{\mathcal{H}}}

\newcommand{\mcA}{\ensuremath{\mathcal{A}}}
\newcommand{\mcB}{\ensuremath{\mathcal{B}}}
\newcommand{\mcC}{\ensuremath{\mathcal{C}}}

\newcommand{\Pip}{\Pi}
\newcommand{\Disj}[1]{\hat{#1}}%{\tilde{#1}}  

\newcommand{\Obs}{\ensuremath{\Omega}}
\newcommand{\obs}{\omega}

\newcommand{\rarw}{\odot}
\newcommand{\larw}{\otimes}

\hyphenation{CoGeNT}\hyphenation{OTIMA}

\newtheorem*{mainthm}{Main Result} 

\newtheorem*{cor}{Corollary} 
\theoremstyle{remark}
\newtheorem*{remark}{Remark} 
\theoremstyle{definition}
\newtheorem*{definition}{Definition}

\renewcommand{\draftmode}{0} 

\begin{filecontents}{mylocalbib.bib}
	@article{fernandoPC,
		author={Pastawski, Fernando},
		note={({p}rivate communication)},
	}
\end{filecontents}

\begin{document}

\title{Classical Branch Structure from Spatial Redundancy in a Many-Body Wavefunction${}^1$}
\date{\today}
\author{C.~Jess~Riedel}\email{jessriedel@gmail.com}
\affiliation{Perimeter Institute for Theoretical Physics, Waterloo, Ontario N2L 2Y5, Canada}

\begin{abstract}

When the wavefunction of a large quantum system unitarily evolves away from a low-entropy initial state, there is strong circumstantial evidence it develops ``branches'': a decomposition into orthogonal components that is indistinguishable from the corresponding incoherent mixture with feasible observations.  Is this decomposition unique?  Must the number of branches increase with time?  These questions are hard to answer because there is no formal definition of branches, and most intuition is based on toy models with arbitrarily preferred degrees of freedom. Here, assuming only the tensor structure associated with spatial locality, I show that branch decompositions are highly constrained just by the requirement that they exhibit redundant local records. The set of all redundantly recorded observables induces a preferred decomposition into simultaneous eigenstates unless their records are highly extended and delicately overlapping, as exemplified by the Shor error-correcting code. A maximum length scale for records is enough to guarantee uniqueness. Speculatively, objective branch decompositions may speed up numerical simulations of nonstationary many-body states, illuminate the thermalization of closed systems, and demote measurement from fundamental primitive in the quantum formalism.
\end{abstract}

\maketitle

Given the wavefunction $\ket{\psi}$ of a many-body system at a given time, we seek to identify a unique decomposition into orthogonal components, 
\begin{align}
\label{eq:branches}
\ket{\psi} = \sum_i \ket{\psi_i},
\end{align}
that have effectively ``collapsed'' in the intuitive sense that their coherent superposition can't be distinguished from the incoherent mixture $\rho = \sum_i \dyad{\psi_i}$ by feasible observations.  This decomposition should be as general and abstract as possible, without a-priori reference to a preferred observer, a preferred apparatus, a preferred set of observables, or a preferred system distinguished from a remaining environment; all these should emerge.  

Without additional structure, every state $\ket{\psi}\in \Glob$ in the Hilbert space is equivalent.  The minimal ingredient we choose to assume is a division of the many-body system into \emph{microscopic} sites (e.g., qubits), which mathematically takes the form of tensor-product structure:
\begin{gather}
\label{eq:lat}
\Glob = \bigotimes_n \Env^{(n)}
\end{gather}
We take \eqref{eq:lat} as a primitive that is ultimately grounded in spatial locality. The associated multipartite entanglement \cite{amico2008entanglement} in $\ket{\psi}$ provides a rich foundation.

Our guiding intuition is that when macroscopically distinct alternatives decohere \cite{zurek1981pointer,*zurek1982environment-induced,*zurek2003decoherence,anastopoulos2002frequently,joos2003decoherence,schlosshauer2008decoherence}, redundant records about the outcome (defined precisely below) are produced through the phenomenon of quantum Darwinism \cite{zurek2000einselection,*zurek2009quantum,ollivier2004objective,*ollivier2005environment,blume-kohout2006quantum,bennett2008second,brandao2015generic,horodecki2015quantum,riedel2016objective}.  In retrospect this is plainly true in the special case of laboratory measurements, where abundant classical correlations are evident in, e.g., the measuring apparatus itself, in the circuits of the electronic readout, in the photons emitted by a display, and in the brains of nearby observers.  Much more commonly, and less obviously, correlated records are naturally and prolifically produced in non-anthropocentric mesoscopic processes, such as when quantum fluctuations are amplified by classically chaotic systems \cite{zurek1994decoherence,*zurek1995quantum,*zurek1999why,elze1995vacuum-induced,zurek1998decoherence-a,pattanayak1999lyapunov,monteoliva2000decoherence,jalabert2001environment-independent} and subsequently decohered by ubiquitous environments like scattered photons \cite{riedel2010quantum,* riedel2011redundant,korbicz2014objectivity}.  
Our strategy is to identify wavefunction branches, at a fixed time, with the multipartite entanglement structure associated to the records generically generated in the wake of these dynamical processes.
We expect records in many different locations, but these need not be microscopically local, so we will look for records to exist in spatial regions -- subsets of the entire lattice \eqref{eq:lat}.

As shown rigorously in this letter, a set of recorded observables induces an objectively preferred decomposition of the wavefunction into branches \eqref{eq:branches} -- each a simultaneous eigenstate of the entire set -- so long as no \emph{two} records of one observable, taken together, spatially overlap \emph{all} records of another.  
Redundancy alone, no matter how large, is not sufficient to guarantee objectivity, but all counterexamples necessarily feature many unnaturally elongated and delicate records, as exhibited by the Shor error-correcting code \cite{shor1995scheme}.  
In fact, the set of \emph{all} observables recorded redundantly on regions bounded by any particular length scale induces a single preferred decomposition of the wavefunction into branches.  This is shown without appeal to arbitrarily preferred macroscopic degrees of freedom, and without breaking any symmetries of the lattice \eqref{eq:lat}, e.g., invariance under translations, rotations, and reflections.

Consider any observable $\Obs^\Frag$ local to some region $\Frag = \bigotimes_{n \in \Frag} \Env^{(n)}$ of the larger Hilbert space $\Glob = \Frag \otimes \FragBar$ containing $\ket{\psi}$.  Let the eigen-decomposition be
\begin{gather}
\Obs^\Frag = \sum_i \obs_i \Pip^\Frag_i,\qquad  \obs_i \in \mathbb{R}, \qquad \Pip^\Frag_i \Pip^\Frag_j = \delta_{ij} \Pip^\Frag_i
\end{gather}
where the $\Pip^\Frag_i = (\Pip^\Frag_i)^2 = (\Pip^\Frag_i)^\dagger$ are orthogonal projectors onto the (generally degenerate) subspaces of $\Frag$ associated with the distinct eigenvalues $\obs_i$, acting trivially on $\FragBar$.

\begin{definition}	
	We say a local observable $\Obs^{\Frag}$ \emph{records} another local observable $\Obs^{\Fragp}$ on a disjoint region $\Fragp$ when, for each $i$,
	\begin{align}
	\label{eq:record}
	\Pip_i^{\Frag} \ket{\psi} = \Pip_i^{\Fragp} \ket{\psi}. 
	\end{align}
	This is a symmetric relation,  naturally extending to a collection $\Obs \equiv \{\Obs^\Frag,\Obs^{\Frag'},\Obs^{\Frag''},\dots\}$ of two or more local observables, on disjoint regions $\{\Frag,\Frag',\Frag'',\dots\}$, recording each other.  We discuss $\Obs$ collectively as a \emph{recorded observable}, referring to $\Obs^\Frag$ as a \emph{record} of $\Obs$ on the region $\Frag$, and the number of records $\vert \Obs \vert$ as the \emph{redundancy} of $\Obs$. Finally, we define the unnormalized \emph{branch} corresponding to $i$ 
	as $\ket{\psi_i} \equiv \Pip_i^{\Frag} \ket{\psi} = \Pip_i^{\Fragp} \ket{\psi} = \Pip_i^{\Frag''} \ket{\psi} = \cdots$.
\end{definition}
\begin{remark}
	Note that $\Obs^{\Frag}$ records $\Obs^{\Fragp}$ if and only if $\Pip_j^{\Frag}  \rho^{\Frag}_{\Fragp:i} \Pip_j^{\Frag}  = \delta_{ij} \rho^{\Frag}_{\Fragp:i}$, where $\rho^{\Frag}_{\Fragp:i} \equiv \Tr_{\FragBar}[ \Pip_i^{\Fragp}\projector{\psi}\Pip_i^{\Fragp}]$ is the state local to $\Frag$ corresponding to the eigenvalue $\obs_{\Fragp:i}$ of $\Obs^\Fragp_i$.  
	Therefore, a local observer can make a measurement on $\Frag$ to infer the value of $\Obs^{\Fragp}$, and similarly for $\Fragp$ and $\Obs^{\Frag}$. In other words, each branch $\ket{\psi_i}$ lives in its own subspace of the local\footnote{This \emph{local} orthogonality of quantum states \cite{horodecki1998entanglement,riedel2013local,riedel2016objective} induces a corresponding constraint on the statistics of measurement outcomes \cite{fritz2013local,sainz2014exploring}.} Hilbert spaces $\Frag$ and $\Fragp$.  Nothing here depends on the actual eigenvalues since they only label the different eigenspaces. In this sense, the object being recorded is a local subalgebra of block-diagonal matrices rather than an observable per se.
\end{remark}

A salient characteristic of macroscopic observables, whether or not associated with the result of laboratory measurements, is that they are recorded with very high redundancy, satisfying \eqref{eq:record} to high accuracy.\footnote{Perfect (von-Neumann \cite{vonneumann1932mathematische}) measurement by a spatially localized apparatus is a measure-zero idealization that real-world experiments can approach with arbitrary precision \cite{dowker1996consistent,wiseman2014quantum}, with relativistic limitations that become exponentially small on scales larger than the relevant Compton wavelength \cite{haag1996local}.}   (More eventually needs to be said about imperfect records and quantifying redundancy, but ultimately this will be an approximate notion like thermodynamic irreversibility, which becomes unambiguous in a large-$N$ limit.)  Our goal is to determine under what conditions there exists a preferred decomposition of the wavefunction into branches that are simultaneous eigenstates of \emph{all} redundantly recorded observables, thereby assigning the branches to the outcomes of performed measurements.

Consider a set of several redundantly recorded observables $\{\Obs_a,\Obs_b,\Obs_c,\dots \}$ whose corresponding eigenvalues are labeled by $i$, $j$, $k$, etc.  In agreement with our real-world expectations, we require that there are records in multiple places of different observables, but do not require that any single region contains a record of all such observables.\footnote{If there is a multi-valued observable for which different eigenvalues are distinguished by records on different regions, it can be decomposed into multiple binary observables each of whose records are unambiguously in a particular region.}   Nonetheless, the records of different macroscopic observables may generally be on overlapping regions, so that they are not guaranteed to commute.  (That is, if $\Obs_a$ is recorded on disjoint regions $\Frag$ and $\Frag'$, and $\Obs_b$ is recorded on disjoint regions $\Grag$ and $\Grag'$, $\Frag$ may still overlap with $\Grag$. See Fig.~\ref{fig:figure}.)  Given this, we would like to determine under what conditions they are all mutually compatible, as expected for classical objectivity.

\begin{definition}
	Suppose $\{\Obs_a=\{\Obs_a^\Frag,\Obs_a^\Fragp,\dots\}\}$ is a set of redundantly recorded observables.  We say the $\Obs_a$ are \emph{compatible} on $\ket{\psi}$ if there exists a decomposition
	\begin{align}
	\label{eq:jointdecomp}
	\ket{\psi} = \sum_{i,j,k,\ldots} \ket{\psi_{i,j,k,\ldots}}
	\end{align}
	where 
	the unnormalized $\ket{\psi_{i,j,k,\ldots}}$ are simultaneous eigenstates of all records in $\{\Obs_a\}$, i.e.,
	\begin{align}
	\label{eq:jointeigen}
	\Obs_a^\Frag \ket{\psi_{i,j,k,\ldots}} = \obs_{a:i} \ket{\psi_{i,j,k,\ldots}}
	\end{align}
	for all $a$, for all $\Obs_a^\Frag\in\Obs_a$, and for all $i$ indexing the real eigenvalues $\obs_{a:i}$ of $\Obs_a^\Frag$.  We call the $\ket{\psi_{i,j,k,\ldots}}$ the \emph{branches} of the joint decomposition. 

\end{definition}

If a set of recorded observables $\{\Obs_a\}$ is compatible on $\ket{\psi}$, it follows that the joint branch decomposition \eqref{eq:jointdecomp} is orthogonal and unique [since \eqref{eq:jointeigen} is equivalent to $\Pip^{\Frag}_{a:i'} \ket{\psi_{i,j,k,\ldots}} = \delta_{i,i'}\ket{\psi_{i,j,k,\ldots}}$], and the branches span a subspace on which all records commute.

Joint branch structure recovers the Everettian intuition that local records can inform localized observers.  It also suggests the unambiguous definition of the coarse-grained branches
$\ket{\psi_{a:i}} \equiv \sum_{j,k,\cdots} \ket{\psi_{i,j,k,\ldots}}$,  $\ket{\psi_{a:i,b:j}} \equiv \sum_{k,\cdots} \ket{\psi_{i,j,k,\ldots}}$, 
etc., and implies the corresponding coarse-graining relationships
$\ket{\psi_{a:i}} = \sum_{j} \ket{\psi_{a:i,b:j}} = \sum_{j,k} \ket{\psi_{a:i,b:j,c:k}}$,
etc. The partially coarse-grained branches are eigenstates of the operators that have not been coarse-grained over.

One can see that compatibility of recorded observables is not trivial: the Bell state
\begin{align}
\label{eq:bell}
\ket{\Phi^{+}} \propto \ket{\uparrow} \ket{\uparrow} + \ket{\downarrow}\ket{\downarrow} = \ket{\rarw}\ket{\rarw} + \ket{\larw} \ket{\larw}
\end{align}
with $\ket{\rarw} \equiv (\ket{\uparrow}+\ket{\downarrow})/\sqrt{2}$ and $\ket{\larw} \equiv (\ket{\uparrow}-\ket{\downarrow})/\sqrt{2}$,
features two observables, $\Obs_{\uparrow, \downarrow}$ and $\Obs_{ \rarw,\larw}$, that are recorded locally twice (once on each qubit) yet are incompatible.\footnote{Note they are incompatible even if auxiliary dimensions are appended to the Hilbert space of the Bell qubits.}

In fact, two observables can each be recorded with \emph{arbitrarily} large redundancy yet be grossly incompatible -- corresponding to noncommuting observables.  An example of this is provided by the generalized Shor code \cite{shor1995scheme}, (a class of) states used to represent quantum information in error-correctable form:
\begin{gather}\begin{split}
\label{eq:shortshor}
\ket{\psi} = \ket{\xi_+} + \ket{\xi_-},\qquad \ket{\xi_\pm}\equiv \left[\ket{0}^{\otimes M'} \pm  \ket{1}^{\otimes M'}\right]^{\otimes M}.
\end{split}\end{gather}
The first incompatible observable is $\Obs_{\pm}$, which corresponds to the branch decomposition above, and which is recorded with redundancy $M$.  The second is $\Obs_{0,1}$, which corresponds to the decomposition $\ket{\psi} = \ket{\chi_\mathrm{0}} + \ket{\chi_\mathrm{1}}$, and which is recorded with redundancy $M'$.  Here,
$\ket{\chi_{r}} = \sum_{\vec{s} \in Z_r}  \bigotimes_{m=1}^{M'} [ \ket{s_m}^{\otimes M}]$ for $r=0,1$,
where $\vec{s} = (s_1,\ldots,s_{M'})$ with $s_m = 0,1$ is a vector of bits, and where $Z_0$ ($Z_1$) denotes the set of such vectors with even (odd) parity.  The record structure is illustrated in Fig.~\ref{fig:figure}(a) and further detail can be found in the Appendix.

Therefore, additional assumptions beyond mere redundancy will be required to identify the preferred macroscopic observables inducing branch structure. We now introduce an important (but initially obscure) asymmetric binary relation on a set of recorded observables and prove it necessarily holds for some pairs if the set is not compatible;  otherwise, they induce a joint branch decomposition.\footnote{Sufficient conditions are discussed in the Appendix.}

\begin{definition}
	Suppose two observables $\Obs_a$ and $\Obs_b$ are redundantly recorded on $\ket{\psi}$. Then we say $\Obs_a$ \emph{pair-covers} $\Obs_b$ if there is at least one pair of records $\Obs_a^\Frag,\Obs_a^\Fragp \in \Obs_a$ such that, for every $\Obs_b^\Grag \in \Obs_b$, the region $\Grag$ spatially overlaps with $\Frag$ or $\Fragp$ (or both). Equivalently, $\Obs_a$ does \emph{not} pair-cover $\Obs_b$ if, for every pair of records $\Obs_a^\Frag,\Obs_a^\Fragp \in \Obs_a$, there exists a record $\Obs_b^{{\Grag}} \in \Obs_b$ such that $\Grag$ is disjoint from both $\Frag$ and $\Fragp$.
	[See Fig.~\ref{fig:figure}(b).]
\end{definition}

Given the very many physical records that exist about macroscopic observables, we do not expect that a pair of accessible records for one observable to spatially overlap with \emph{all} records of another.  Even if an observable has some spurious, highly diffuse records in addition to the localized ones that are feasibly accessible to observers, the modified recorded observable formed by simply dropping the diffuse records should avoid pair-covering other macroscopic observables.  This procedure only fails if most or all of the records are extensively overlapping in this way.  Indeed, the Shor code exemplifies this; its two incompatible recorded observables pair-cover each other regardless of how many records are dropped, since each record of one observable covers all records of the other.  [See Fig.~\ref{fig:figure}(a).]  For large redundancy, the records must become arbitrarily extended in space.

\newcommand{\maintheoremstatement}{Suppose $\{\Obs_a=\{\Obs_a^\Frag,\Obs_a^\Fragp,\dots\}\}$ is a set of redundantly recorded observables for $\ket{\psi}$.  If none of the recorded observables pair-covers another, then they are all compatible, and so define a joint branch decomposition of simultaneous eigenstates of all records.}

\begin{mainthm}\maintheoremstatement\end{mainthm}

\begin{proof}
(Sketch.) The strategy is to show that an arbitrary product of record projectors ($\Pip^{\Frag}_{a:i}\Pip^{\Grag}_{b:j}\cdots$) acting on $\ket{\psi}$, as in \eqref{eq:jointdecomp}, is independent of both the order of the $\Pi$'s and of the particular choice of $\Obs_a^\Frag \in \Obs_a$, $\Obs_b^\Grag \in \Obs_b$, etc. The proof is by induction on the number of projectors in the product, starting with two: $\Pip^{\Frag}_{a:i}\Pip^{\Grag}_{b:j}\ket{\psi} = \Pip^{\Grag}_{b:j}\Pip^{\Frag}_{a:i}\ket{\psi} = \Pip^{\Grag'}_{b:j}\Pip^{\Fragp}_{a:i}\ket{\psi}$. All steps are elementary, consisting of repeated application of the definition of local records (i.e., $\Pip^\Frag_{a:i}\ket{\psi} = \Pip^{\Frag'}_{a:i}\ket{\psi}$ for all $\Obs_a^\Frag,\Obs_a^\Fragp \in \Obs_a$), and the lack of pair-covering (i.e., $[\Pip^\Frag_{a:i},\Pip^{\Disj{\Grag}}_{b:j}t]=0=[ \Pip^{\Frag'}_{a:i},\Pip^{\Disj{\Grag}}_{b:j}]$ for all $\Obs_a^\Frag,\Obs_a^\Fragp \in \Obs_a$ and for some choice $\Obs_b^{\Disj{\Grag}} \in \Obs_b$ with $\Disj{\Grag} = \Disj{\Grag}(\Frag,\Frag')$). 
See the Appendix for details.
\end{proof}

This result gives evidence that our intuition about records may be enough to fully constrain the branch structure of a many-body wavefunction.  However, it does not necessarily single out a unique decomposition.  An ideal criterion for classical observables could be checked on \emph{individual} recorded observables yet guarantee mutual compatibility, thereby identifying a single maximal set.

Note that such a criterion must make reference to something besides scale-invariant properties of the recording regions.  Given an arbitrary set of regions on which some observable is recorded, an incompatible observable can be recorded on a dilated but otherwise identical set of regions.  A state fulfilling this is
\begin{align}
\label{eq:dilatedpsi}
\sum_{\pm} \left[(\ket{0}_\Grag \ket{0}_{\Grag'}\cdots) \pm (\ket{1}_\Grag \ket{1}_{\Grag'}\cdots)\right]( \ket{\pm}_{\Frag'}\ket{\pm}_{\Frag''}\cdots),\,\,\,
\end{align}
where $\Frag = ({\Grag} \otimes {\Grag'} \otimes \cdots)$ is a region in which \emph{one} record of $\Omega_a$ and \emph{all} records of $\Omega_b$ are inscribed.  This is illustrated in Fig.~\ref{fig:figure}(c). 
In other words, if we know only the regions on which putatively classical information is redundantly recorded, it is always possible that incompatible but redundantly recorded information hides at very small or very large length scales.

The following corollary assumes a preferred length scale to state a criterion that can be checked on individual recorded observables.  It is not fully satisfactory as a fundamental criterion for objective branch structure, but it illustrates the form that such a criterion could take.

\begin{figure} [b]
	\centering 
	\newcommand{\pbwidthfactor}{0.98}
	\includegraphics[width=\pbwidthfactor\columnwidth]{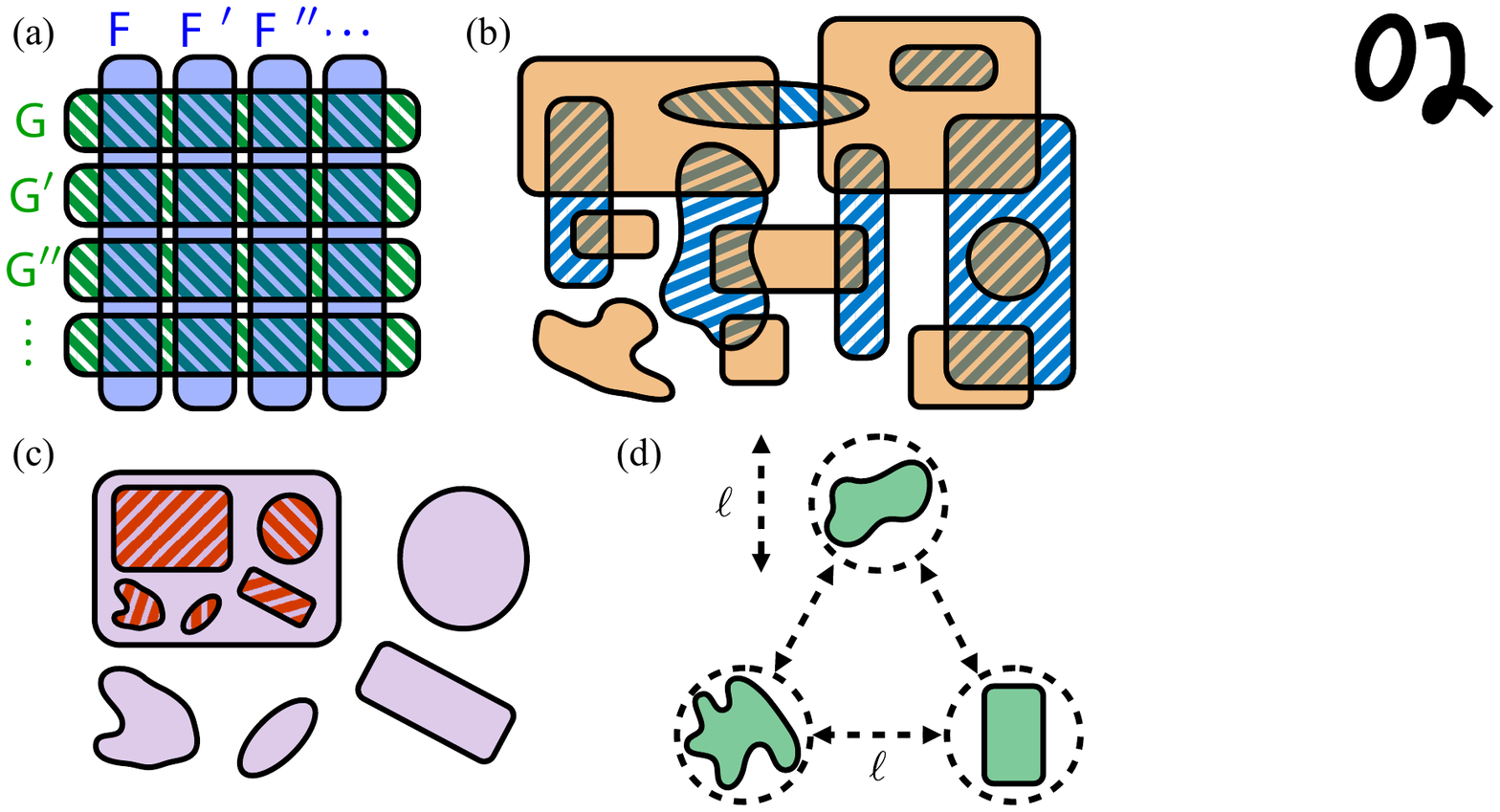}
	\caption{
		Spatially disjoint regions with the same coloring (e.g., the solid blue regions $\Frag, \Fragp, \ldots$) denote different records for the same observable (e.g., $\Obs_a = \{\Obs_a^\Frag,\Obs_a^\Fragp,\ldots\}$). (a) The spatial record structure of the Shor-code family of states, which can exhibit arbitrary redundancy (in this case four-fold) for two incompatible observables. (b) The solid orange observable pair-covers the hashed blue observable because the top two orange records overlap all blue records.  However, if one of the top two orange records is dropped, then neither observable pair-covers the other, and hence both are compatible, despite many overlaps of individual records.  (c) Any spatially bounded set of records can be contained inside a single record of a sufficiently dilated but otherwise identical set of records for an incompatible observable; such a state is given in Eq.~\eqref{eq:dilatedpsi}.  
		(d) Any observable with records satisfying the hypothesis of the Corollary for some length $\ell$ cannot pair-cover, or be pair-covered by, any other such observable.
	}
	\label{fig:figure}
\end{figure}

\newcommand{\corollaryDstatement}{Fix a characteristic spatial distance $\ell$ and consider the set of all recorded observables $\{\Obs_a\}$ on $\ket{\psi}$ satisfying the following requirement: each $\Obs_a$ is recorded on at least $3$ regions, with each region fitting in a sphere of radius $\ell$ and pairwise separated by the distance $\ell$. [See Fig.~\ref{fig:figure}(d)].  Then none of the $\Obs_a$ pair-covers another, and they are all therefore compatible and define a joint branch decomposition.}

\begin{cor}\label{corsphere}\corollaryDstatement\end{cor}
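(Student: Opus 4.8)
The plan is to deduce the corollary directly from the Main Result, which already guarantees compatibility (and hence a joint branch decomposition) as soon as we know that no recorded observable in the set pair-covers another. So the entire task reduces to a purely geometric statement: under the stated hypotheses, for any two observables $\Obs_a,\Obs_b$ in the set, $\Obs_a$ does not pair-cover $\Obs_b$. Using the equivalent formulation of non-pair-covering, I must exhibit, for every pair of records $\Obs_a^\Frag,\Obs_a^\Fragp\in\Obs_a$, some record $\Obs_b^\Grag\in\Obs_b$ whose region $\Grag$ is disjoint from both $\Frag$ and $\Fragp$.

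The heart of the argument is a counting lemma: a single recording region can overlap at most one record of $\Obs_b$. First I would fix a record region $\Frag$ of $\Obs_a$ and suppose, for contradiction, that two distinct records $\Obs_b^\Grag,\Obs_b^{\Grag'}$ both overlap $\Frag$. Choosing a point in each intersection produces two points that both lie inside $\Frag$ --- hence within the spatial extent of a single radius-$\ell$ region --- yet belong to two different records of $\Obs_b$, which by hypothesis are pairwise separated by $\ell$. The plan is to show that these two requirements are incompatible, so that at most one $\Obs_b$-record meets $\Frag$, and likewise at most one meets $\Fragp$. Consequently the pair $\{\Frag,\Fragp\}$ overlaps at most two records of $\Obs_b$ in total.

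The corollary then follows by pigeonhole: since each $\Obs_b$ is recorded on at least three regions but at most two of them can meet $\Frag$ or $\Fragp$, there is always a third record $\Obs_b^\Grag$ disjoint from both. This is exactly the witness required, so $\Obs_a$ does not pair-cover $\Obs_b$. Because the hypotheses are symmetric in $a$ and $b$, the same argument shows $\Obs_b$ does not pair-cover $\Obs_a$, and since $a,b$ were arbitrary, no observable in the set pair-covers another; invoking the Main Result then yields compatibility and the joint branch decomposition. Figure~\ref{fig:figure}(d) is the picture to keep in mind throughout.

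The step I expect to be the main obstacle is the geometric counting lemma, and specifically making the separation-versus-size estimate airtight. Everything hinges on the claim that two records of $\Obs_b$ cannot both fit within the reach of one recording region of $\Obs_a$, which is really the statement that the mutual separation of an observable's records dominates the diameter of any single region. I would therefore take care to track the precise relationship between the radius-$\ell$ bound on each region and the $\ell$-separation between records (including the borderline case of distances attained with equality), since it is exactly this margin --- rather than any deep property of records or of $\ket{\psi}$ --- that powers the whole corollary. The reduction to non-pair-covering and the pigeonhole count are otherwise elementary.
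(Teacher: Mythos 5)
Your overall architecture is exactly the paper's: for each pair $\Obs_a,\Obs_b$, use the separation hypothesis to produce, for every pair of records $\Obs_a^\Frag,\Obs_a^\Fragp\in\Obs_a$, a record of $\Obs_b$ disjoint from both; conclude that no observable pair-covers another; and invoke the Main Result. The paper does precisely this, fixing three pairwise-separated records $\Grag,\Grag',\Grag''$ of $\Obs_b$ and asserting that the pair $\Frag,\Fragp$ can together meet at most two of them before pigeonholing. Your single-region counting lemma (one record of $\Obs_a$ meets at most one record of $\Obs_b$) is a mildly stronger statement than the paper's pair-meets-at-most-two claim, but it is the natural way to establish it, so there is no genuine difference in route.

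The gap is exactly the step you flagged, and it cannot be ``made airtight'' with the constants as literally given: a region fitting in a sphere of radius $\ell$ has diameter up to $2\ell$, while the records of $\Obs_b$ are only guaranteed pairwise separation $\ell$. A single region can therefore straddle the gap between two records (two points at distance exactly $\ell$ fit in a ball of radius $\ell/2$), and in spatial dimension $\geq 2$ one region can even meet all three of $\Grag,\Grag',\Grag''$ simultaneously, since an equilateral triangle of side $\ell$ has circumradius $\ell/\sqrt{3}<\ell$; if $\Obs_b$ has exactly three records, that region together with any partner overlaps every record of $\Obs_b$, so non-pair-covering is \emph{not} forced by the stated geometry. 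Notably, the paper's own proof quietly substitutes ``each have diameter at most $\ell$'' for the hypothesis ``fitting in a sphere of radius $\ell$,'' and even that version retains the equality borderline you anticipated (two points at distance exactly $\ell$ form a set of diameter $\ell$). The repair is the one your last paragraph gestures at: the pairwise separation of an observable's records must strictly exceed the maximal diameter of a single region --- e.g., read the hypothesis as regions of diameter at most $\ell$ with separation strictly greater than $\ell$, or keep radius-$\ell$ regions and demand separation greater than $2\ell$. With that inequality in hand your single-region lemma is immediate, the pair meets at most two of the chosen triplet, and the rest of your argument (pigeonhole, symmetry in $a$ and $b$, appeal to the Main Result, and the Bell-state tightness remark) matches the paper's proof verbatim.
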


\begin{remark}
	This bound is tight in the sense that incompatible observables each with two such records can exist.  The special role of the number three in this bound is fundamental, and is essentially the same as in the triorthogonal decomposition theorem \cite{elby1994triorthogonal}.  In both cases, we are able to rule out quantum effects in information that is distributed over more than two subsystems because of the monogamy of entanglement \cite{terhal2001family,coffman2000distributed}.
	
	This corollary is the first place we have associated the lattice with a notion of distance (or even topology), and it only functions to ensure that regions are disjoint.  The distance $\ell$ might be motivated by a fundamental correlation scale of the state $\ket{\psi}$, or the maximum distance over which realistic observers can make measurements.  (Poetically, a macroscopic wavefunction at any given time has a unique branch decomposition generated by \emph{all} observables that could be recorded in several human brains, $\sim\!\! 20$ cm.) Note that the mere existence of some records that become diffused over a distance larger than $\ell$ does not interfere with applying the Corollary to the modified recorded observable formed by dropping the superfluous diffuse records.
\end{remark}

\emph{Discussion}. It seems very unlikely that the recorded observables corresponding to traditional laboratory measurements would pair-cover one another, by virtue of the millions \cite{riedel2010quantum} of localized records distributed over macroscopic distances, so they are expected to generate a joint branch decomposition in the wavefunction of the universe.  More generally, we expect the same when classically chaotic systems amplify quantum fluctuations, which then decohere \cite{zurek1994decoherence,elze1995vacuum-induced,zurek1995quantum,zurek1999why,zurek1998decoherence-a,pattanayak1999lyapunov,monteoliva2000decoherence,jalabert2001environment-independent}, without any involvement of observers or laboratory equipment.   In contrast to idealizations that assume that different record-holding regions are approximately separable \cite{korbicz2014objectivity}, the above construction is not stymied by the presence of stray entanglement, an unavoidable aspect of the real world \cite{zurek2013wave-packet}.

That said, there are important limitations that remain to be addressed.   We have not shown that a branch decomposition is stable in the presence of small errors or imperfect records.  We have also not resolved how the decomposition, which is induced by the locally causal production of records, would transform under relativistic boosts, nor how it would be defined if there is no Cauchy surface.  The preferred tensor structure \eqref{eq:lat} is justified by the universal nature of spatial locality (see also \cite{cao2016space,cotler2017locality}), but this structure is not applicable on scales smaller than the Compton wavelength of a relativistic quantum field \cite{fredenhagen1985remark,summers1990independence,haag1996local,zych2010entanglement}. Very importantly, the hypothesis of the Corollary relies on an unexplained length scale and does not obviously agree with intuition in all cases; a fundamental uniqueness theorem is lacking.

Ideally, objective branch structure would be built up from a Lorentz and scale invariant condition (cf.\ Refs.~\cite{kent2014solution,*kent2015lorentzian,bousso2012multiverse}). 
This might be based on a preferred length scale or inertial frames extracted from the state $\ket{\psi}$, or it might appeal to other principles, such as the information redundantly recorded in \emph{most} collections of lattice sites \cite{brandao2015generic,fernandoPC}. In any case, the decomposition will only be convincing if it is simple and rigorously recovers all intuition about the evolution of macroscopic observables.
Of course, for some states the decomposition may be trivial (just one branch) but insofar as it is defined by properties that the macroscopic classical world is expected to obey, the induced branches would be nontrivial and objectively exist ``out there in the real world'' -- they would not be just a useful structure relative to a particular observer.  

The production of records during macroscopic amplification is a thermodynamically irreversible process; in principle, it is always possible to conduct quantum experiments in a perfectly sealed laboratory and have the resulting outcomes ``recohered'' by an external agent with sufficiently powerful abilities.  Therefore, we do not expect branching to occur at an exact moment in time, but rather to emerge in a large-$N$ limit.  (Certainly, the production of only three records, as in a GHZ state \cite{amico2008entanglement}, is not enough to ensure persistent objectivity of the recorded observable.)  So it is likely that some distinguishability metric between candidate branches could usefully quantify the permanence of branching.  One possibility is simply the amount of redundancy, which is somewhat analogous to the Hamming distance between alternative branches.  Perhaps more compelling is (a computable approximation to) the logical depth \cite{bennett1995logical} or quantum circuit complexity \cite{yao1993quantum, aaronson2004multilinear} between branches; with careful preparation, matter interferometers successfully interfere different configurations of $10^4$ nucleons \cite{gerlich2011quantum} (and hence, in a sense, $10^4$ records), but this would be infeasible if the two configurations were well scrambled\footnote{I thank Adam Brown and Leonard Susskind for this example.} \cite{sekino2008fast}.  Given a sufficient threshold, we expect branches to divide, but not recombine, under time evolution. For time-homogeneous systems, branches presumably divide into sub-branches at a regular rate, leading to a total number of branches that increases exponentially in time.

For the sake of argument, suppose we have identified a satisfying (though possibly laborious) procedure for decomposing the wavefunction at any given time into branches of simultaneous eigenstates of preferred observables.\footnote{Such a procedure can be applied recursively to each branch in the decomposition, allowing for \emph{branch-dependent} structure \cite{gell-mann1990quantum,mueller2007branch}, such as when the observable measured by an experiment is conditioned on the outcome of a previous one.}   Given this, we can investigate how the number and type of branches at different times relate to each other. Many compelling questions could be investigated:  What is the behavior of the entropy defined by the spectrum of squared branch weights ($\{ \vert \ket{\psi_{i,j,k,\dots}}\vert^2\}$) \cite{brun1999entropy}, and is it related to the Kolmogorov-Sinai entropy of the macroscopic degrees of freedom \cite{zurek1998decoherence-a}?  At what branch-distance threshold is branch formation irreversible? Finite-dimensional systems allow for at most a finite number of orthogonal branches \cite{diosi1995maximum,dowker1995properties,dowker1996consistent}; when does branching halt, and what does the transition look like?  Since thermalized systems are characterized by a lack of redundancy -- disjoint local measurements inside a uniform-temperature oven are completely uncorrelated with each other -- can the destruction of records connect the dissolution of branch structure with the thermalization process itself?

If a computationally efficient method for identifying branches for a given state $\ket{\psi}$ could be found, it would enable simulations of any non-stationary many-body systems whose failure to be compactly described by a tensor network is due to an exponential proliferation of branches -- and, hence, long-range entanglement.  Such entanglement builds up, for instance, when local excitations scatter into superpositions of different out-states. Crucially, an $N$-point function can be calculated by \emph{sampling} the branches for observables recorded more than $N$ times: $\matrixelement{\psi}{\,\mathcal{O}_1 \cdots\, \mathcal{O}_N} {\psi} = \sum_{i,j,\dots} \matrixelement{\psi_{a:i,b:j,\dots}}{\,\mathcal{O}_1 \cdots\, \mathcal{O}_N} {\psi_{a:i,b:j,\dots}}$, because there exists choices of $\Obs^\Frag_a, \Obs^\Grag_b, \dots,$ such that $0=[\Pip^\Frag_{a:i},\mathcal{O}_n]=[\Pip^\Grag_{b:j},\mathcal{O}_n]=\dots$, for all $n$. 
Here, the sum need only include enough choices of $(i,j,\dots)$ to ensure a small error, which scales polynomially with the desired accuracy and is independent of the number of branches for finite variance. 
As the total number of branches increases exponentially with time evolution, the number that need to be simulated can be held constant; some branches would be retained with probability proportional to their norm squared, and the rest ``pruned''.

In principle, an objective branch decomposition of the wavefunction of the universe at every moment in time could reduce quantum theory to a classical stochastic theory -- without invocation of observers or measurements as primitive concepts -- in the following sense: It would provide a well-behaved probability distribution over different outcomes, and for each outcome it would specify a preferred set of observables and their values (while remaining appropriately silent on the values of incompatible observables).  These observables would follow quasiclassical trajectories over timescales on which the conditions for Ehrenfest's theorem hold.  It would thus convert the ad-hoc operational procedure by which quantum mechanics is applied \cite{dowker1996consistent,bassi2000decoherent,okon2014measurements,wallace2016what} into a formal calculus.

\begin{acknowledgments}
I thank Scott Aaronson, Charles Bennett, Adam Brown, Todd Brun, Josh Combes, Martin Ganahl, Steve Giddings, Daniel Gottesman, James Hartle, Adrian Kent, Siddharth Muthukrishnan, Don Page, Fernando Pastawski, John Preskill, Leonard Susskind, Guifre Vidal, Tian Wang, I-Sheng Yang, Wojciech Zurek, and Michael Zwolak for inspiration, illuminating discussion, and feedback. Research at Perimeter Institute is supported by the Government of Canada through Industry Canada and by the Province of Ontario through the Ministry of Economic Development and Innovation.
\end{acknowledgments}
	
\bibliographystyle{apsrev4-1}
\bibliography{zotriedel,mylocalbib}

\appendix

\section{Appendix A: Proofs}
\label{sec:mainproof}

Here we restate and prove the Main Result and Corollary.\footnote{Below and in the main Letter,  we assume the following slightly ambiguous convention: For any three recorded observables $\Obs_a$, $\Obs_b$, and $\Obs_c$, they are recorded respectively by the regions $\{\Frag, \Frag',\cdots\}$, $\{\Grag,\Grag',\cdots\}$, and $\{\mathcal{I},\mathcal{I}',\cdots\}$.  Likewise, $i$, $j$, and $k$ index the eigenvalues of $\Obs_a$, $\Obs_b$, and $\Obs_c$, respectively. This avoid an additional layer of subscripts, e.g., $\Frag_a^\prime$, $\Frag_b$, $i_b$, etc.}

\begin{mainthm}\maintheoremstatement\end{mainthm} 

\begin{proof}
	We will show that a product of record projectors ($\Pip$'s), when acting on the state $\ket{\psi}$, is independent of both the order of the projectors and of the choice of recording regions for each particular recorded observable.  That is, we show
	\begin{align}
	\label{eq:canrelabel}
	\left(\Pip^\Frag_{a:i}\Pip^\Grag_{b:j}\cdots \right)\ket{\psi} = \left(\Pip^{\Frag'}_{a:i}\Pip^{\Grag'}_{b:j}\cdots  \right)\ket{\psi} 
	\end{align}
	for any $\Obs_a^\Frag,\Obs_a^{\Frag'} \in \Obs_a$, and $\Obs_b^\Grag,\Obs_b^{\Grag'} \in \Obs_b$, and so on, and that furthermore this object is unchanged when the $\Pip$'s are commuted with one another.
	
	The strategy used here is simple but laborious, essentially arising from repeated application of the two basic identities defining local records,
	\begin{align}\begin{split}
	\forall \Obs_a^\Frag,\Obs_a^{\Frag'} \!\in\! \Obs_a, \forall i, \qquad \Pip^\Frag_{a:i}\ket{\psi} = \Pip^{\Frag'}_{a:i}\ket{\psi},
	\end{split}\end{align}
	and the failure to pair-cover
	\begin{align}\begin{split}
	\forall \Obs_a^\Frag,\Obs_a^{\Frag'} \!\in\! \Obs_a ,\, \exists &\Obs_b^{\Disj{\Grag}} \!\in\! \Obs_b, \forall i, \forall j, \\
	&\left[ \Pip^\Frag_{a:i},\Pip^{\Disj{\Grag}}_{b:j}\right]=0=\left[ \Pip^{\Frag'}_{a:i},\Pip^{\Disj{\Grag}}_{b:j}\right],
	\end{split}\end{align}
	where $\Disj{\Grag} = \Disj{\Grag}(\Frag,\Frag')$.

	The proof is by induction.  First, we need to demonstrate that if we have just a pair of local record projectors acting on the state, $\Pip^\Frag_{a:i}\Pip^\Grag_{b:j}\ket{\psi}$, the object is unchanged if we send $\Frag \to \Frag'$, $\Grag \to \Grag'$ for arbitrary new $\Obs_a^{\Frag'} \in \Obs_a,\Obs_b^{\Grag'} \in \Obs_b$.  That is, we must prove
	\begin{align}
	\Pip^\Frag_{a:i}\Pip^\Grag_{b:j}\ket{\psi}
	&=  \Pip^{\Frag'}_{a:i} \Pip^{\Grag'}_{b:j}  \ket{\psi}.
	\end{align}
	This is shown by repeated application of our two basic identities:
	\begin{align}\begin{split}
	\Pip^\Frag_{a:i}\Pip^\Grag_{b:j}\ket{\psi} &= \Pip^\Frag_{a:i}\Pip^{\Disj{\Grag}}_{b:j}\ket{\psi} \\
	&= \Pip^{\Disj{\Grag}}_{b:j} \Pip^\Frag_{a:i}\ket{\psi} \\
	&= \Pip^{\Disj{\Grag}}_{b:j} \Pip^{\Frag'}_{a:i} \ket{\psi} \\
	&=  \Pip^{\Frag'}_{a:i} \Pip^{\Disj{\Grag}}_{b:j} \ket{\psi} \\
	\label{eq:pair-relabel}
	&=  \Pip^{\Frag'}_{a:i} \Pip^{\Grag'}_{b:j} \ket{\psi}.
	\end{split}\end{align}
	Similarly, we can swap the order:
	\begin{align}\begin{split}
	\Pip^\Frag_{a:i}\Pip^\Grag_{b:j}\ket{\psi} &= \Pip^{\Disj{\Frag}}_{a:i} \Pip^{\Grag'}_{b:j} \ket{\psi}\\
&=  \Pip^{\Grag'}_{b:j} \Pip^{\Disj{\Frag}}_{a:i}  \ket{\psi}\\
&=  \Pip^{\Grag'}_{b:j} \Pip^{\Frag'}_{a:i}  \ket{\psi}
	\end{split}\end{align}
	where the first line follows from \eqref{eq:pair-relabel} and we have used $\Disj{\Frag} = \Disj{\Frag}(\Grag,\Grag')$. 
	Note that this works when $\Frag=\Frag'$, $\Grag=\Grag'$, or both.
	
	Now assume we have shown that we can change the record locations ($\Grag\to \Grag'$, $\Hrag \to \Hrag'$, etc.)\ and do arbitrary operator re-orderings for some string of $M$ record projectors acting on the state:
	\begin{align}
	\label{eq:assump}
	\Pip^\Grag_{b:j}\cdots\,\Pip^{\Hrag}_{c:k}\ket{\psi}
	&=  \Pip^{\Grag'}_{b:j}\cdots\,\Pip^{\Hrag'}_{c:k}\ket{\psi}.
	\end{align}
	Then we must prove that if we left-multiply by an additional record projector $\Pip^\Frag_{a:i}$ then this property holds for the longer string of $M+1$ record projectors:
	\begin{align}
	\Pip^\Frag_{a:i} \left( \Pip^\Grag_{b:j}\cdots\,\Pip^{\Hrag}_{c:k} \right)\ket{\psi}
	&=  \Pip^{\Frag'}_{a:i} \left( \Pip^{\Grag'}_{b:j}\cdots\,\Pip^{\Hrag'}_{c:k} \right) \ket{\psi}.
	\end{align}
	To show that, we just apply the inductive assumption \eqref{eq:assump} with the choices $\Disj{\Grag} = \Disj{\Grag}(\Frag,\Frag')$, $\Disj{\Hrag} = \Disj{\Hrag}(\Frag,\Frag')$, etc.:
	\begin{align}\begin{split}
	\Pip^\Frag_{a:i} \left( \Pip^\Grag_{b:j} \cdots\, \Pip^{\Hrag}_{c:k} \right) \ket{\psi}
	&= \Pip^\Frag_{a:i} \left( \Pip^{\Disj{\Grag}}_{b:j} \cdots\, \Pip^{\Disj{\Hrag}}_{c:k} \right) \ket{\psi} \\
	&= \left( \Pip^{\Disj{\Grag}}_{b:j} \cdots\, \Pip^{\Disj{\Hrag}}_{c:k} \right)  \Pip^\Frag_{a:i} \ket{\psi} \\
	&= \left( \Pip^{\Disj{\Grag}}_{b:j} \cdots\, \Pip^{\Disj{\Hrag}}_{c:k} \right)  \Pip^{\Frag'}_{a:i} \ket{\psi} \\
	&= \Pip^{\Frag'}_{a:i} \left( \Pip^{\Disj{\Grag}}_{b:j} \cdots\, \Pip^{\Disj{\Hrag}}_{c:k} \right) \ket{\psi} \\
	\label{eq:mrelabel}
	&= \Pip^{\Frag'}_{a:i} \left( \Pip^{\Grag'}_{b:j} \cdots\, \Pip^{\Hrag'}_{c:k} \right)  \ket{\psi}.
	\end{split}\end{align}
	Similarly, it's easy to check that the $M+1$ record projectors can be arbitrarily re-ordered.  To do this, all we need to note is that the left-most two projectors can be commuted:
	\begin{align}\begin{split}
	\Pip^\Frag_{a:i} \Pip^\Grag_{b:j} \left( \Pip^{\Hrag}_{c:k} \cdots \right) \ket{\psi} &= \Pip^\Frag_{a:i} \Pip^{\Disj{\Grag}}_{b:j} \left( \Pip^{\Hrag}_{c:k} \cdots \right) \ket{\psi}\\
	&= \Pip^{\Disj{\Grag}}_{b:j} \Pip^\Frag_{a:i}  \left( \Pip^{\Hrag}_{c:k} \cdots \right) \ket{\psi}\\
	&= \Pip^{\Grag}_{b:j} \Pip^\Frag_{a:i}  \left( \Pip^{\Hrag}_{c:k} \cdots \right) \ket{\psi}
	\end{split}\end{align}
	where the first and last lines follows from \eqref{eq:mrelabel}.
	Commuting the two left-most operators, combined with the inductive assumptions, is sufficient to then re-order the entire product.
	
	By induction, this proves that we can relabel a product of any length, with arbitrary re-orderings.
	
	It's then easy to see that the recorded observables are compatible.  The branch decomposition is given by
	\begin{gather}
	\ket{\psi} = \sum_{i,j,k,\ldots} \ket{\psi_{i,j,k,\ldots}},\\
	\ket{\psi_{i,j,k,\ldots}} = 
	\left(\Pip^\Frag_{a:i}\Pip^\Grag_{b:j}\cdots \right)\ket{\psi}
	\end{gather}
	where the branches are unambiguously defined since they do not depend on the choices $\Frag$, $\Grag$, etc., nor the order of the projectors.  It  follows that each branch is an eigenvector of every record.
\end{proof}

\begin{cor}\corollaryDstatement\end{cor}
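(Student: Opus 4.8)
The plan is to reduce the Corollary entirely to the Main Result. Since the Main Result already guarantees compatibility — and hence a joint branch decomposition into simultaneous eigenstates of all records — as soon as no recorded observable pair-covers another, it suffices to check that the stated geometric hypothesis rules out \emph{all} pair-covering among the $\{\Obs_a\}$. Thus the only work is geometric; the algebraic content is inherited wholesale from the Main Result.

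The key lemma I would establish is that \emph{a single record region can intersect at most one record of any other observable}. Suppose for contradiction that a region $\Frag$ (a record $\Obs_a^\Frag \in \Obs_a$) overlapped two distinct records $\Obs_b^\Grag, \Obs_b^{\Grag'} \in \Obs_b$. Choosing lattice sites $p \in \Frag \cap \Grag$ and $p' \in \Frag \cap \Grag'$, the pairwise-separation hypothesis for $\Obs_b$ forces $p$ and $p'$ to lie at distance at least $\ell$, while the confinement of $\Frag$ bounds their separation by its diameter. Provided the hypothesis is read so that each region's diameter is strictly smaller than the inter-record separation, these two statements contradict one another through the triangle inequality, so $\Frag$ meets at most one record of $\Obs_b$. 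Extracting exactly this size-versus-separation inequality from the radius bound and the separation $\ell$ — and handling the borderline ``touching'' case carefully on the lattice — is the one genuinely delicate point, and I expect it to be the main obstacle of the whole proof.

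Granting the lemma, the conclusion follows by counting. Take any pair of records $\Obs_a^\Frag, \Obs_a^{\Frag'} \in \Obs_a$. By the lemma, $\Frag$ overlaps at most one record of $\Obs_b$ and $\Frag'$ overlaps at most one, so the pair collectively meets at most two records of $\Obs_b$. Since $\Obs_b$ is recorded on at least three regions, at least one record $\Obs_b^{\Grag}$ is disjoint from both $\Frag$ and $\Frag'$, which is precisely the statement that $\Obs_a$ does \emph{not} pair-cover $\Obs_b$. Because the hypothesis is symmetric in $\Obs_a$ and $\Obs_b$, the identical argument rules out pair-covering in the reverse direction, and it applies to every pair drawn from $\{\Obs_a\}$. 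Hence no recorded observable in the set pair-covers another, and the Main Result supplies the joint branch decomposition.

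Finally, I would record why three is the critical redundancy, as this makes the structure of the bound transparent and explains its tightness. A pair of $\Obs_a$-records can reach at most two records of another observable, so one spare record must remain free to defeat covering; the threshold is therefore $2+1=3$, and the bound is tight because two records each (as in the Bell state) can be arranged to pair-cover. This ``no quantum information shared across three or more regions'' phenomenon is the same monogamy-of-entanglement effect underlying the triorthogonal decomposition theorem \cite{elby1994triorthogonal}, which I would invoke to motivate the special role of the number three.
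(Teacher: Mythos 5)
Your proof is correct and takes essentially the same route as the paper's: a purely geometric counting argument — any pair of $\Obs_a$-records can together meet at most two of three pairwise-separated $\Obs_b$-records, so one record always escapes, pair-covering fails in both directions, and the Main Result supplies the joint decomposition — with tightness witnessed by the same Bell-state example. Your explicit single-region lemma is precisely the step the paper uses implicitly (its proof likewise reads the hypothesis with region size strictly below the separation, writing ``diameter at most $\ell$'' despite the radius-$\ell$ phrasing of the statement), so the borderline size-versus-separation subtlety you flag is shared with, not worse than, the published argument.
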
 

\begin{proof}
	For any two of the observables, $\Obs_a$ and $\Obs_b$, consider the three records $\Grag$, $\Grag'$, and $\Grag''$ of $\Obs_b$ that are pairwise separated by the distance $\ell$.  Any pair of regions $\Frag$ and $\Frag'$ holding records of $\Obs_a$, which each have diameter at most $\ell$, can together overlap with at most two of $\Grag$, $\Grag'$, and $\Grag''$ by virtue of that triplet's spatial separation.  Therefore, $\Obs_a$ does not pair-cover $\Obs_b$, for all $a$ and $b$, so by the main theorem all observables are compatible.

	That this bound is tight is demonstrated by considering a Bell state, $\ket{\Phi^{+}} \propto \ket{\uparrow} \ket{\uparrow} + \ket{\downarrow}\ket{\downarrow} = \ket{\rarw}\ket{\rarw} + \ket{\larw} \ket{\larw}$. The observables $\Obs_\updownarrow$ and $\Obs_{\rarw,\larw}$ may each be recorded on arbitrarily small regions separated by an arbitrary long distance, yet are incompatible.
\end{proof}

\section{Appendix B: Insufficiency of pair-covering}
\label{sec:insuff}

Even when one recorded observable pair-covers another, it may still be possible to prove that the two observables are compatible using only the spatial layout of their records.  In other words, pair-covering, as a condition on spatial regions, is not sufficient to guarantee the existence of a wavefunction with incompatible observables recorded on those regions.  A simple counterexample can be constructed on a tripartite Hilbert space $\Glob = \mcA \otimes \mcB \otimes \mcC$ where $\Obs_a$ is recorded separately on all three regions and $\Obs_b$ on just $\mcA$ and $\mcB$.  Although $\Obs_a$ pair-covers $\Obs_b$, one can use the commutation of disjoint records to directly check that $\Pip_{a:i}^\Frag \Pip_{b:j}^\Grag \ket{\psi}$ is unchanged for any choice of $\Frag \in\{\mcA,\mcB,\mcC\}$ and $\Grag \in\{\mcA,\mcB\}$, and after swapping the order of $\Pip_{a:i}^\Frag$ and $\Pip_{b:j}^\Grag$.

\section{Appendix C: Shor code example}
\label{sec:shor}

Quantum codes are classes of quantum states used to encode quantum information in a noisy, correctable memory.  One of the important examples are the Shor codes, especially the special case of the 9-qubit code.  Here we show that the Shor codes are useful counterexamples when considering the mutual compatibility of recorded observables.  In particular, two incompatible observables can each be recorded with arbitrary redundancy using the Shor code.

For two integers $M,M'>1$, consider a many-body system with $MM'$ parts organized with the tensor structure 
\begin{align}
\Glob = \bigotimes_{m=1}^M \bigotimes_{m'=1}^{M'} \Sys^{(m,m')},
\end{align}
where each subsystem $\Sys^{(m,m')}$ has the same dimension $d$.  For our purposes, we will work with just qubits, $d=2$.

\begin{widetext}
	For an arbitrary single-qubit state $\ket{\psi} = \alpha\ket{0}+\beta\ket{1}$, let the isometry $\Lambda$ be defined to map $\ket{\psi}$ to the many-body state
	\begin{align}\begin{split}
	\ket{\Psi} &= \Lambda(\ket{\psi}) \\
	&=  \alpha \left[ \ket{0}^{\otimes M'} +  \ket{1}^{\otimes M'} \right]^{\otimes M} + \beta \left[ \ket{0}^{\otimes M'} -  \ket{1}^{\otimes M'} \right]^{\otimes M}\\
	&= \alpha \left[ \ket{00 \cdots 0} +  \ket{11 \cdots 1} \right]\otimes \cdots \otimes \left[ \ket{00 \cdots 0} +  \ket{11 \cdots 1} \right] 
	+ \beta \left[ \ket{00 \cdots 0} -  \ket{11 \cdots 1} \right]\otimes \cdots \otimes \left[ \ket{00 \cdots 0} -  \ket{11 \cdots 1} \right]
	\end{split}\end{align}
	It can be shown that this state is a redundant encoding of two complementary observables, $\sigma_z = \projector{0}-\projector{1}$ and $\sigma_x = \ketbra{0}{1} + \ketbra{1}{0}$, for the original state $\ket{\psi}$.

	This is instructive to see in the special case of $M=M'=3$, for which the state reduces to the 9-qubit code. 
	For clarity, we label the subsystems $A$, $B$, $C$, etc.:
	\begin{align}\begin{split}
	\ket{\Psi} = \alpha &\left[ \ket{000}_{ABC} +  \ket{111}_{ABC} \right]\otimes  \left[ \ket{000}_{DEF} +  \ket{111}_{DEF} \right]\otimes \left[ \ket{000}_{GHI} +  \ket{111}_{GHI} \right] \\
	+ \beta &\left[ \ket{000}_{ABC} -  \ket{111}_{ABC} \right]\otimes  \left[ \ket{000}_{DEF} -  \ket{111}_{DEF} \right]\otimes \left[ \ket{000}_{GHI} -  \ket{111}_{GHI} \right]
	\end{split}\end{align}
	It's clear that by making a measurement on just $ABC$ in the basis $\{\ket{000}+\ket{111},\ket{000}-\ket{111}\}$ we get the outcomes with respective probabilities $\abs{\alpha}^2$ and $\abs{\beta}^2$, and that this provides as much information as measuring the original state $\ket{\psi}$ in the computational basis $\{\ket{0},\ket{1}\}$, i.e., measuring the observable $\sigma_z$.  Likewise is true for observers who have access to only $DEF$ or $GHI$.  
	
	Now we can rewrite this code state as 
	\begin{align}\begin{split}
	\ket{\Psi} = (\alpha +\beta) \big[ &\ket{000}_{ADG} \ket{000}_{BEH}\ket{000}_{CFI} +\ket{011}_{ADG} \ket{011}_{BEH}\ket{011}_{CFI} \\
	+ &\ket{101}_{ADG} \ket{101}_{BEH}\ket{101}_{CFI} + \ket{110}_{ADG} \ket{110}_{BEH}\ket{110}_{CFI} \big] \\
	+ (\alpha -\beta) \big[ &\ket{001}_{ADG} \ket{001}_{BEH}\ket{001}_{CFI} +\ket{010}_{ADG} \ket{010}_{BEH}\ket{010}_{CFI} \\
	+ &\ket{100}_{ADG} \ket{100}_{BEH}\ket{100}_{CFI} + \ket{111}_{ADG} \ket{111}_{BEH}\ket{111}_{CFI} \big]
	\end{split}\end{align}
	and consider an alternate observer who instead measures the subset $ADG$ in the computational basis $\{\ket{000},\ket{001},\ldots\}$, or any other basis that distinguishes whether there is an even or odd number of 1's.  The outcome will be an even or odd number of 1's with respective probabilities $\abs{\alpha+\beta}^2$ and $\abs{\alpha-\beta}^2$.  This provides as much information as measuring the original state $\ket{\psi}$ in the complementary basis $\{\ket{0}+\ket{1}, \ket{0}-\ket{1}\}$, i.e., measuring the observable $\sigma_x$. Likewise is true for observers who have access to only $BEH$ or $CFI$.
\end{widetext}

In fact, there is a symmetry between $\sigma_z$ and $\sigma_x$, so it's also possible to measure $\sigma_z$ using only single-qubit measurements on one of the record regions.  One can distinguish $\ket{000}_{ABC}+\ket{111}_{ABC}$ from $\ket{000}_{ABC}-\ket{111}_{ABC}$ by measuring each qubit in the basis $\ket{\pm} \propto \ket{0}\pm \ket{1}$ and taking the parity. Of course, if $\sigma_z$ is first determined by measuring $ABC$, then the information about $\sigma_x$ is destroyed and cannot be obtained through a measurement on $ADG$, and vice versa.

These measurements work just as well for arbitrary $M,M'>3$.  Therefore, if we have $M$ observers and have them each make a measurement on $\Frag^{[m]} = \bigotimes_{m=1}^{M'} \Sys^{(m,m')}$, then $\sigma_z$ is recorded redundantly ($M$-fold times), with a local copy at each observer.  On the other hand, we could have $M'$ observers each make a measurement on $\Grag^{[m']} = \bigotimes_{m=1}^M \Sys^{(m,m')}$, 
so that $\sigma_x$ is recorded redundantly ($M'$-fold times), with a local copy at each observer.

\end{document}